\newtheorem{theorem}{\textbf{Theorem}}
\newtheorem{lemma}{\textbf{Lemma}}
\newtheorem{proposition}{\textbf{Proposition}}
\newtheorem{remark}{\textbf{Remark}}
\begin{document}

\title{Wireless Information and Power Transfer in Two-Way Amplify-and-Forward Relaying Channels}
\author{Zhiyong Chen, Biao Wang, Bin Xia and Hui Liu \\
Department of Electronic Engineering, Shanghai Jiao Tong University, Shanghai, P. R. China\\
Email: {\{zhiyongchen, bxia, huiliu\}@sjtu.edu.cn}}
\maketitle
\begin{abstract}
The various wireless networks have made the ambient radio frequency signals around the world. Wireless information and power transfer enables the devices to recycle energy from these ambient radio frequency signals and process information simultaneously. In this paper, we develop a wireless information and power transfer protocol in two-way amplify-and-forward relaying channels, where two sources exchange information via an energy harvesting relay node. The relay node collects energy from the received signals and uses it to provide the transmission power to forward the received signals. We analytically derive the exact expressions of the outage probability, the ergodic capacity and the finite-SNR diversity-multiplexing trade-off (DMT). Furthermore, the tight closed-form upper and lower bounds of the outage probability and the ergodic capacity are then developed. Moreover, the impact of the power splitting ratio is also evaluated and analyzed. Finally, we show that compared to the non-cooperative relaying scheme, the proposed protocol is a green solution to offer higher transmission rate and more reliable communication without consuming additional resource.
\end{abstract}
\section{Introduction}
 As various wireless networks developed, most devices are surrounded by ambient radio frequency (RF) signals anytime and anywhere, e.g., cellular signals or Wi-Fi signals. Every ambient RF signal carries not only information but also energy. It has been shown that one device can wirelessly recycle these energy from the ambient RF signals \cite{ARF_Sigcom}. Recently, the wireless information and power transfer technology enables one device to collect energy and process the information from the ambient RF signals simultaneously \cite{First_paper_energy,WIPT_TWC}. This offers an exciting new way to meet the requirement of the green communications.

 The basic idea of wireless information and power transfer was first proposed in \cite{First_paper_energy} and a general receiver architecture was then developed in \cite{WIPT_TWC}. Following these two pioneering works, the concept was extended to multiple-input multiple-output (MIMO) systems in \cite{WIPT_TWC_MIMO, WIPT_IWCT}, cooperative networks in \cite{WIPT_TWC_Relay}, and orthogonal frequency division multiplexing (OFDM) systems in \cite{WIPT_OFDM}, etc. On the other hand, the rate-energy tradeoff was analyzed in \cite{First_paper_energy,WIPT_TWC} and  the outage probability and throughput were analyzed for the one-way relay channels in \cite{WIPT_TWC_Relay}. An energy-efficient power allocation scheme for cooperative networks was developed in \cite{WIPT_Power}.

 In this paper, we design the wireless information and power transfer protocol for two-way relay channels by using the amplify-and-forward (AF) scheme, where two source nodes exchange information through an energy constrained relay node. The wireless information and power transfer enables the relay node to deliver both sources' signals without any itself energy. We find that the energy constrained relay node cannot purely enlarge the signal quality because of the fact that it does not consume any extra energy. Subsequently, the exact expressions of the outage probability, the ergodic capacity and the finite-SNR diversity-multiplexing trade-off (DMT) for the proposed protocol are derived. Further, we develop the tight closed-form upper and lower bounds of the outage probability and the ergodic capacity of the system. Moreover, the impact of the power splitting ratio depicted the trade-off between the harvesting energy and the forward signals' power on the ergodic capacity and finite-SNR DMT is also evaluated and analyzed in this paper. Finally, we show that the use of the relay node can improve the ergodic capacity and achieve higher outage performance. Surprisingly, this improvement is not build on the additional resource consumption, neither the energy, time nor radio spectrum resource.
\section{System model and Protocol description}
\subsection{System Model}
We consider a half-duplex two-way relay channels where sources $S_{1}$ and $S_{2}$ exchange information through an energy harvesting relay node. By assumption, the relay node only forwards the data and is not a source or destination. Theoretically, the relay operations are usually carried out in two transmission stages, namely, the multiple-access (MA) stage (e.g., sources-to-relay) and the broadcasting (BC) stage (e.g., relay-to-destinations). In the MA stage, $S_{1}$ and $S_{2}$ transmit their messages to the relay node simultaneously. The resulting signals are then broadcast to $S_{1}$ and $S_{2}$ in the BC stage. The signal processing of the relay node is based on the AF scheme and the duration of both stages can thus be normalized to $1/2$.

The system operates in joint information and power transfer model. Specifically, both source nodes have a fixed power supply, i.e., the transmit power of $S_{1}$ and $S_{2}$ is $P_{1}$ and $P_{2}$, respectively. For an energy constrained relay node, however, there is no fixed power supply and it thus needs to scavenge energy from the received signal in the MA stage. Furthermore, the relay node simultaneously processes the information using the harvesting energy.
\subsection{Energy Harvesting Relaying Protocol}
During the MA stage, the received signal at the relay node is given by
\begin{equation}
y_{r}=\sqrt{P_{1}}h_{1}x_{1}+\sqrt{P_{2}}h_{2}x_{2}+\widetilde{n}_{a},
\end{equation}
where $x_{i}$ is the unit-power transmitted information, $\widetilde{n}_{a}$ denotes the narrow-band Gaussian noise introduced by the receiving antenna, $h_{i}$ characterizes the channel gain between $S_{i}$ and the relay node. All channels are modeled as the quasi-static Rayleigh fading channel in this paper and thus we have $h_{i}\thicksim \mathcal{CN}(0,\Omega_{i})$, $i=1,2$.

As described in \cite{WIPT_TWC, WIPT_TWC_Relay}, the power splitting model is used at the energy constrained relay node. The received signals' power can be split into two parts by a power splitter, one for energy harvesting and the other one for information processing. The signal for energy harvesting can be expressed as
\begin{equation}
\sqrt{\lambda}y_{r}=\sqrt{\lambda P_{1}}h_{1}x_{1}+\sqrt{\lambda P_{2}}h_{2}x_{2}+\sqrt{\lambda}\widetilde{n}_{a},
\end{equation}
where $0<\lambda<1$ is the portion signal power split to scavenge energy. Here, we can calculate the total harvested energy during the MA stage as following \cite{WIPT_TWC}
\begin{equation}
Q=\eta\lambda(P_{1}|h_{1}|^{2}+P_{2}|h_{2}|^{2})\frac{1}{2}.
\end{equation}
Here, $0<\eta\leqslant1$ denotes the energy conversion efficiency. We then have the transmitted power of the relay node as following:
\begin{equation}
P_{r}=Q/(1/2)=\eta\lambda(P_{1}|h_{1}|^{2}+P_{2}|h_{2}|^{2}).
\end{equation}

Meanwhile, the remaining received signal power is sent to do the information processing. The broadcasting signal by the relay node is then given by
\begin{align}
x_{r}{\setlength\arraycolsep{0.5pt}=}\beta\sqrt{P_{r}}(\sqrt{1-\lambda}y_{r}{\setlength\arraycolsep{0.5pt}+}n_{b}){\setlength\arraycolsep{0.5pt}\approx } \sqrt{\frac{\eta\lambda}{1-\lambda}}(\sqrt{1-\lambda}y_{r}+n_{b}),
\end{align}
where $n_{b}\thicksim \mathcal{CN}(0,\sigma^{2}_{b})$ denotes the additive white Gaussian noise (AWGN) introduced by the signal conversion from passband to baseband. The power constraint factor $\beta$ of the relay node is given by
\begin{align}
\beta&=\frac{1}{\sqrt{(1-\lambda)P_{1}|h_{1}|^{2}+(1-\lambda)P_{2}|h_{2}|^{2}+(1-\lambda)\sigma_{a}^{2}+\sigma_{b}^{2}}}\nonumber\\
&\approx\frac{1}{\sqrt{(1-\lambda)P_{1}|h_{1}|^{2}+(1-\lambda)P_{2}|h_{2}|^{2}}}.
\end{align}
Here, the passband noise $\widetilde{n}_{a}$ is changed to the baseband AWGN $n_{a}\thicksim \mathcal{CN}(0,\sigma^{2}_{a})$.

During the BC stage, the received signals at $S_{i}$ is given by
\begin{align}
&y_{i}=h_{i}x_{r}+n_{i}\\
&{\setlength\arraycolsep{0.3pt}=}\sqrt{\eta\lambda P_{i}}h_{i}^{2}x_{i}{\setlength\arraycolsep{0.3pt}+}\sqrt{\eta\lambda P_{j}}h_{i}h_{j}x_{j}{\setlength\arraycolsep{0.3pt}+}\sqrt{\eta\lambda}h_{i}n_{a}{\setlength\arraycolsep{0.3pt}+}\sqrt{\frac{\eta\lambda}{1{\setlength\arraycolsep{0.3pt}-}\lambda}}h_{i}n_{b}{\setlength\arraycolsep{0.3pt}+}n_{i},\nonumber
\end{align}
for $i, j\in\{1,2\}$ and $i\neq j$. We assume the channels are reciprocal and $n_{i}\thicksim \mathcal{CN}(0,\sigma^2_{i})$ is noise at $S_{i}$. Without loss of generality, we assume $\sigma^{2}_{1}=\sigma^{2}_{2}=\sigma^{2}_{a}+\sigma^{2}_{b}=\sigma^{2}$.

Since $x_{i}$ is known perfectly by $S_{i}$, $S_{i}$ can cancel the self interference from $y_{i}$. Therefore, we can compute the signal-to-noise ratio (SNR) as following
\begin{align}
\gamma_{1}=\frac{P_{2}|h_{2}|^{2}}{\sigma^{2}}\frac{1}{1{\setlength\arraycolsep{0.3pt}+}\frac{\epsilon\lambda}{1{\setlength\arraycolsep{0.3pt}-}\lambda}{\setlength\arraycolsep{0.3pt}+}\frac{1}{\eta\lambda|h_{1}|^{2}}}\label{SNR1}
\end{align}
for $S_{1}$ and
\begin{align}
\gamma_{2}=\frac{P_{1}|h_{1}|^{2}}{\sigma^{2}}\frac{1}{1+\frac{\epsilon\lambda}{1-\lambda}+\frac{1}{\eta\lambda|h_{2}|^{2}}}\label{SNR2}
\end{align}
for $S_{2}$. Here, we use $\epsilon=\sigma^{2}_{b}/(\sigma^{2}_{a}+\sigma^{2}_{b})$.

Accordingly, the data rate at $S_{i}$ is given by
\begin{align}
R_{i}=\frac{1}{2}\log_{2}(1+\gamma_{i}).
\end{align}
\begin{remark}
From (\ref{SNR1}) and (\ref{SNR2}), it is worth mentioning that $\gamma_{1}\leq P_{2}|h_{2}|^{2}/\sigma^{2}$ and $\gamma_{2}\leq P_{1}|h_{1}|^{2}/\sigma^{2}$ always hold. We notice that $P_{i}|h_{i}|^{2}/\sigma^{2}$ is the SNR of the channel between $S_{i}$ and the relay node in the MA stage. It is implied that the energy constrained relay node cannot purely enlarge the signal quality without consuming additional energy. However, as shown in Section IV, considering the impact of the path loss between two nodes, such protocol can also outperform the non-cooperative relaying scheme, where both sources communicate directly without cooperating with one relay node.
\end{remark}
\section{Information-Theoretic Metrics}
In this section, we investigate the performance of the proposed energy harvesting relaying protocol. Our information-theoretic metrics of interest are the outage probability, the ergodic capacity and the finite-SNR diversity-multiplexing trade-off.
\subsection{Outage Probability}
For two-way AF relaying channels, the overall system outage probability is defined as
\begin{align}
P_{out}&=\Pr\left(R_{1}<T_{1},~~or~~ R_{2}<T_{2}\right)\label{outage_definition}\\
&=\Pr(\gamma_{1}<\tau_{1})+\Pr(\gamma_{2}<\tau_{2})-\Pr\left(\gamma_{1}<\tau_{1}, \gamma_{2}<\tau_{2}\right)\nonumber,
\end{align}
where $T_{i}$ denotes the target rate of $S_{i}$ and we use $\tau_{i}=2^{2T_{i}}-1$, for $i=1,2$.

We can rewrite the output SNR as following
\begin{align}
\gamma_{1}=\frac{P_{2}}{\sigma^2}\frac{|h_{1}|^{2}|h_{2}|^{2}}{b|h_{1}|^{2}+c},~~~\gamma_{2}=\frac{P_{1}}{\sigma^2}\frac{|h_{1}|^{2}|h_{2}|^{2}}{b|h_{2}|^{2}+c}.
\end{align}
where $b=1+\epsilon\lambda/(1-\lambda)$ and $c=1/\eta\lambda$. In order to derive the outage probability, we first establish the following lemma.
\begin{lemma}
 Let $X$ and $Y$ be $|h_{1}|^2$ and $|h_{2}|^2$, respectively. Let us define the following variable
\begin{align}
Z=\frac{aXY}{bX+c}, a\geq0, b\geq0,c\geq0.
\end{align}
where $a$, $b$ and $c$ are independent parameters with $X$ and $Y$. We thus have the cumulative density function (cdf) of $Z$ as
\begin{align}
F_{Z}(z)=1-\frac{\exp(-\frac{zb}{a\Omega_{2}})}{\Omega_{1}}\sqrt{\frac{4zc\Omega_{1}}{a\Omega_{2}}}\mathcal{K}_{1}\left(\sqrt{\frac{4zc}{a\Omega_{1}\Omega_{2}}}\right),
\end{align}
where $\mathcal{K}_{n}(\cdot)$ denotes the modified Bessel function of the second kind with order $n$ defined in \cite{Math_book}.
\end{lemma}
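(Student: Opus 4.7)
The plan is to evaluate $F_Z(z)=\Pr(Z\le z)$ by conditioning on $X$ and then reducing the resulting one-dimensional integral to a standard Bessel representation.

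First I would note that since $h_i\sim\mathcal{CN}(0,\Omega_i)$, the random variables $X=|h_1|^2$ and $Y=|h_2|^2$ are independent and exponentially distributed with means $\Omega_1$ and $\Omega_2$, so $f_X(x)=(1/\Omega_1)e^{-x/\Omega_1}$ and similarly for $Y$. Because the map $y\mapsto axy/(bx+c)$ is strictly increasing in $y$ for every fixed $x>0$, I can invert it to write
\begin{align}
F_Z(z)=\int_0^\infty \Pr\!\left(Y\le \frac{z(bx+c)}{ax}\,\Big|\,X=x\right)f_X(x)\,dx.
\end{align}
Using the exponential CDF of $Y$ and pulling the $x$-independent factor $\exp(-zb/(a\Omega_2))$ outside the integral leaves
\begin{align}
F_Z(z)=1-\frac{e^{-zb/(a\Omega_2)}}{\Omega_1}\int_0^\infty \exp\!\left(-\frac{zc}{a\Omega_2}\cdot\frac{1}{x}-\frac{x}{\Omega_1}\right)dx.
\end{align}

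Next I would invoke the classical identity (Gradshteyn--Ryzhik 3.324.1)
\begin{align}
\int_0^\infty \exp\!\left(-\frac{\nu}{x}-\mu x\right)dx = 2\sqrt{\frac{\nu}{\mu}}\,\mathcal{K}_1\!\left(2\sqrt{\mu\nu}\right),
\end{align}
which is the main analytic step. Setting $\mu=1/\Omega_1$ and $\nu=zc/(a\Omega_2)$ transforms the remaining integral into $2\sqrt{zc\Omega_1/(a\Omega_2)}\,\mathcal{K}_1\!\bigl(2\sqrt{zc/(a\Omega_1\Omega_2)}\bigr)$; substituting this back and absorbing the factor of $2$ into the square root as $\sqrt{4zc\Omega_1/(a\Omega_2)}$ yields exactly the claimed formula.

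The only real obstacle is spotting the Bessel-$\mathcal{K}_1$ integral representation; everything else is a mechanical conditioning argument with exponential densities. I would double-check the regularity assumption $a,b,c\ge 0$ (with $a,c>0$ needed for the change of variables and for the integral to converge; $b=0$ is the limiting case in which the exponential prefactor degenerates to $1$), and verify the edge behavior $F_Z(0)=0$ via the small-argument asymptotic $\mathcal{K}_1(u)\sim 1/u$, which cancels the $\sqrt{z}$ factor and returns $1-1=0$ as required.
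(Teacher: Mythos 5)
Your proof is correct and is exactly the argument the paper has in mind: the authors omit it as ``easy by definition,'' but the same conditioning on $X$ followed by the integral identity $\int_0^\infty \exp(-\nu/x-\mu x)\,dx = 2\sqrt{\nu/\mu}\,\mathcal{K}_1(2\sqrt{\mu\nu})$ is what underlies their computation of $\Pr(\gamma_i<\tau_i)$ in Appendices A and B. Your checks of the degenerate cases ($b=0$, $z\to 0$) are a welcome addition that the paper does not supply.
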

\begin{IEEEproof}
By definition, it is easy to prove this lemma.
\end{IEEEproof}

 Using the cdf in Lemma 1, the following theorem describes an exact expression of the outage probability.
\begin{theorem}
 The outage probability of two-way AF relaying channels with energy harvesting can be expressed as
 \begin{align}
P_{out}&\approx1+\exp{\left(-\frac{X_{0}}{\Omega_{1}}-\frac{Y_{0}}{\Omega_{2}}\right)}\label{outage_exact}\\
&-\sum\limits_{ i,j \in \{ 1,2\}  \hfill \atop
i \ne j }\left\{\exp({\setlength\arraycolsep{0.5pt}-}\frac{\sigma^2\tau_{i}b}{P_{j}\Omega_{j}})\sqrt{\frac{4\sigma^2\tau_{i}c}{P_{j}\Omega_{1}\Omega_{2}}}\mathcal{K}_{1}\left(\sqrt{\frac{4\sigma^2\tau_{i}c}{P_{j}\Omega_{1}\Omega_{2}}}\right)\right.\nonumber\\
&\left.-\sum_{n=0}^{2}\frac{\mathcal{H}^{(n)}_{i,j}(\nu_{i})}{n!(n+1)}\Bigg\{(V_{i}-\nu_{i})^{n+1}-(-\nu_{i})^{n+1}\Bigg\}\right\}\nonumber,
\end{align}
where $\mathcal{H}_{i,j}(z)=\exp\left(\frac{-\sigma^2\tau_{j}c}{P_{i}\Omega_{i}}\frac{1}{z}-\frac{z}{\Omega_{j}}\right)$, $\mathcal{H}^{(0)}_{i,j}(\nu_{i})=\mathcal{H}_{i,j}(\nu_{i})$, $\mathcal{H}^{(1)}_{i,j}(\nu_{i})=\mathcal{H}_{i,j}(\nu_{i})(\frac{\sigma^2\tau_{j}c}{P_{i}\Omega_{i}\nu_{i}^2}-\frac{1}{\Omega_{j}})$ and $\mathcal{H}^{(2)}_{i,j}(\nu_{i})=\mathcal{H}_{i,j}(\nu_{i})\left\{\left(\frac{\sigma^2\tau_{j}c}{P_{i}\Omega_{i}\nu_{i}^2}-\frac{1}{\Omega_{j}}\right)^2-\frac{2\sigma^2\tau_{j}c}{P_{i}\Omega_{i}\nu_{i}^3}\right\}$. Here, $\nu_{i}=V_{i}/2$ with $V_{1}=Y_{0}$ and $V_{2}=X_{0}$. $X_{0}$ and $Y_{0}$ are given by $(\varphi_{1}+\sqrt{\varphi^2_{1}+(4\sigma^2\tau_{1}^2\tau_{2}b^2c)/P_{1}})/2\tau_{1}b$ and $(\varphi_{2}+\sqrt{\varphi^2_{2}+(4\sigma^2\tau_{2}^2\tau_{1}b^2c)/P_{2}})/2\tau_{2}b$ with $\varphi_{1}=\sigma^2\tau_{1}\tau_{2}b^{2}/P_{1}+P_{2}\tau_{2}c/P_{1}-\tau_{1}c$ and $\varphi_{2}=\sigma^2\tau_{1}\tau_{2}b^{2}/P_{2}+P_{2}\tau_{2}c/P_{2}-\tau_{2}c$, respectively.
\end{theorem}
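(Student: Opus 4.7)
The plan is to start from (\ref{outage_definition}), compute the two marginal probabilities directly from Lemma~1, and handle the joint probability $\Pr(\gamma_1<\tau_1,\gamma_2<\tau_2)$ by an explicit geometric splitting of the $(X,Y)$ plane combined with a Taylor-series truncation. With $X=|h_1|^2$ and $Y=|h_2|^2$ (independent exponentials of means $\Omega_1,\Omega_2$), the two events $\{\gamma_i<\tau_i\}$ become ``below a hyperbola'' constraints of the form $Y<A_1+B_1/X$ and $X<A_2+B_2/Y$, with constants read off from (\ref{SNR1})--(\ref{SNR2}). Applying Lemma~1 to each marginal $\Pr(\gamma_i<\tau_i)$ with $a=P_j/\sigma^2$ immediately produces the two $\mathcal{K}_1$ Bessel terms appearing in the statement.

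For the joint event, I would first locate the unique intersection of the two hyperbolic boundaries; eliminating one variable gives a quadratic whose positive root is exactly $X_0$ (resp.\ $Y_0$), with $\varphi_1,\varphi_2$ appearing as the linear coefficients inside the quadratic formula. The joint region then splits naturally at $(X_0,Y_0)$: to one side of the corner the $\gamma_1$-boundary is the binding upper envelope, to the other the $\gamma_2$-boundary is. Integrating the joint exponential pdf $\frac{1}{\Omega_1\Omega_2}\exp(-x/\Omega_1-y/\Omega_2)$ over these two pieces, the ``tail'' portions beyond $X_0$ (or $Y_0$) reduce via the same manipulation used in Lemma~1, while the evaluation at the corner produces the $\exp(-X_0/\Omega_1-Y_0/\Omega_2)$ contribution visible in (\ref{outage_exact}). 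What remains on each piece is a finite-interval integral of the form $\int_0^{V_i}\mathcal{H}_{i,j}(z)\,dz$ with $\mathcal{H}_{i,j}(z)=\exp(-\sigma^2\tau_j c/(P_i\Omega_i z)-z/\Omega_j)$, which has no elementary antiderivative; this is the source of the ``$\approx$'' in the statement. I would dispatch it by expanding $\mathcal{H}_{i,j}$ to second order about the midpoint $\nu_i=V_i/2$ and integrating termwise, which yields
\begin{align*}
\int_0^{V_i}\!\mathcal{H}_{i,j}(z)\,dz\approx\sum_{n=0}^{2}\frac{\mathcal{H}^{(n)}_{i,j}(\nu_i)}{n!(n+1)}\bigl[(V_i-\nu_i)^{n+1}-(-\nu_i)^{n+1}\bigr],
\end{align*}
precisely the bracketed sum in the theorem, and the closed forms of $\mathcal{H}^{(0)}_{i,j},\mathcal{H}^{(1)}_{i,j},\mathcal{H}^{(2)}_{i,j}$ follow by direct differentiation of $\mathcal{H}_{i,j}$.

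Collecting the three contributions per (\ref{outage_definition}) and symmetrizing the joint-region analysis over $(i,j)\in\{(1,2),(2,1)\}$ yields the claimed formula. The main obstacle I anticipate is the geometric bookkeeping on the joint event: correctly identifying which hyperbolic boundary binds on which side of $(X_0,Y_0)$, and keeping track of signs when the marginal Bessel pieces, the corner evaluation, and the Taylor-approximated finite integrals recombine into (\ref{outage_exact}). The Taylor truncation itself, although responsible for the approximation, is mechanically routine once the correct finite-interval integral has been isolated.
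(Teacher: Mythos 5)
Your proposal follows essentially the same route as the paper's Appendix A: the marginals via Lemma~1, the intersection point $(X_0,Y_0)$ from the quadratic obtained by eliminating one variable from the two hyperbolic boundaries, a splitting of the joint region that isolates the corner term $\exp(-X_0/\Omega_1-Y_0/\Omega_2)$ together with the two incomplete integrals $\int_0^{V_i}\mathcal{H}_{i,j}(z)\,dz$, and the second-order Taylor expansion about the midpoint $\nu_i=V_i/2$ integrated termwise. The only cosmetic difference is that the paper carves the joint region along the ray through the origin and $(X_0,Y_0)$ rather than by which boundary binds, but both decompositions lead to the identical $\mathcal{I}_i$ integrals and hence to (\ref{outage_exact}).
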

\begin{proof}
See Appendix A.
\end{proof}

As shown in the preceding theorem, the integral item can be calculated through numerical computation and then we have $P_{out}$. In order to reduce the computation complexity, lower and upper bounds of $P_{out}$ are derived as following proposition.
\begin{proposition}
The outage probability can be lower bounded by
\begin{align}
P_{out}&\geq1+\exp{\left(-\frac{X_{0}}{\Omega_{1}}-\frac{Y_{0}}{\Omega_{2}}\right)}-\exp\left(-\frac{\sigma^2\tau_{2}b}{P_{1}\Omega_{1}}-\frac{Y_{0}}{\Omega_{2}}\right)\nonumber\\
&-\exp\left(-\frac{\sigma^2\tau_{1}b}{P_{2}\Omega_{2}}-\frac{X_{0}}{\Omega_{1}}\right),\label{lowerbound}
\end{align}
and can also be upper bounded as follows
 \begin{align}
P_{out}&\leq1+\exp{\left(-\frac{X_{0}}{\Omega_{1}}-\frac{Y_{0}}{\Omega_{2}}\right)}-\exp\left(-\frac{\sigma^2\tau_{2}b}{P_{1}\Omega_{1}}-\frac{Y_{0}}{\Omega_{2}}\right)\nonumber\\
&{\setlength\arraycolsep{0.5pt}\times}\sqrt{\frac{4\sigma^2\tau_{2}c}{P_{1}\Omega_{1}\Omega_{2}}}\mathcal{K}_{1}\left(\sqrt{\frac{4\sigma^2\tau_{2}c}{P_{1}\Omega_{1}\Omega_{2}}}\right){\setlength\arraycolsep{0.5pt}-}\exp\left({\setlength\arraycolsep{0.5pt}-}\frac{\sigma^2\tau_{1}b}{P_{2}\Omega_{2}}{\setlength\arraycolsep{0.5pt}-}\frac{X_{0}}{\Omega_{1}}\right)\nonumber\\
&\times\sqrt{\frac{4\sigma^2\tau_{1}c}{P_{2}\Omega_{1}\Omega_{2}}}\mathcal{K}_{1}\left(\sqrt{\frac{4\sigma^2\tau_{1}c}{P_{2}\Omega_{1}\Omega_{2}}}\right).\label{upperbound}
\end{align}
\end{proposition}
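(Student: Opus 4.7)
The plan is to work with the complementary success event $\mathcal{S}=\{\gamma_1\geq\tau_1,\,\gamma_2\geq\tau_2\}$, so that $P_{out}=1-\Pr(\mathcal{S})$, and to view $\mathcal{S}$ as a region in the $(X,Y)$-plane with $X=|h_1|^2$, $Y=|h_2|^2$ independent exponentials of means $\Omega_1,\Omega_2$. With the shorthands $a_i=\sigma^2\tau_i b/P_j$, $b_i=\sigma^2\tau_i c/P_j$ for $\{i,j\}=\{1,2\}$, the SNR conditions become the hyperbolic constraints $Y\geq a_1+b_1/X$ and $X\geq a_2+b_2/Y$; the pair $(X_0,Y_0)$ from Theorem~1 is by construction their first-quadrant intersection.

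I would first establish three geometric facts. (a) The corner rectangle $\{X\geq X_0,\,Y\geq Y_0\}$ is contained in $\mathcal{S}$, since $a_1+b_1/X\leq a_1+b_1/X_0=Y_0\leq Y$ on it and symmetrically. (b) $\mathcal{S}\subseteq\{X\geq X_0\}\cup\{Y\geq Y_0\}$: if a point of $\mathcal{S}$ has $Y<Y_0$, then $\gamma_1\geq\tau_1$ forces $X\geq b_1/(Y-a_1)>b_1/(Y_0-a_1)=X_0$. (c) The slab identity $\mathcal{S}\cap\{Y\geq Y_0\}=\{Y\geq Y_0\}\cap\{\gamma_2\geq\tau_2\}$, i.e., above $Y_0$ the constraint $\gamma_1\geq\tau_1$ becomes redundant. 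Substituting the boundary form $X=a_2+b_2/Y$ into $Y\geq a_1+b_1/X$ reduces (c) to the quadratic inequality $g(Y):=a_2Y^2+(b_2-a_1a_2-b_1)Y-a_1b_2\geq 0$ for $Y\geq Y_0$; since $g(Y_0)=0$ (as $(X_0,Y_0)$ satisfies both equations) and $g(0)=-a_1b_2<0$, Vieta forces the second root to be $-a_1b_2/(a_2Y_0)<0$, so $g\geq 0$ on $[Y_0,\infty)$. The analogous identity on $\{X\geq X_0\}$ follows by interchanging indices.

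Given (a)--(c), inclusion--exclusion on the decomposition $\mathcal{S}=(\mathcal{S}\cap\{Y\geq Y_0\})\cup(\mathcal{S}\cap\{X\geq X_0\})$ yields
\[
\Pr(\mathcal{S})=\Pr(\mathcal{S}\cap\{Y\geq Y_0\})+\Pr(\mathcal{S}\cap\{X\geq X_0\})-e^{-X_0/\Omega_1-Y_0/\Omega_2},
\]
and conditioning on $Y$ gives
\[
\Pr(\mathcal{S}\cap\{Y\geq Y_0\})=e^{-a_2/\Omega_1}\int_{Y_0}^\infty\frac{1}{\Omega_2}e^{-y/\Omega_2-b_2/(y\Omega_1)}\,dy,
\]
with its $X,Y$-swapped counterpart for the other slab. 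Dropping the nonnegative term $b_2/(y\Omega_1)$ in the exponent gives the upper bound $e^{-a_2/\Omega_1-Y_0/\Omega_2}$; substituting these into $1-\Pr(\mathcal{S})$ yields (\ref{lowerbound}). Shifting $y=u+Y_0$ and noting that $b_2/((u+Y_0)\Omega_1)\leq b_2/(u\Omega_1)$ provides the lower bound $e^{-Y_0/\Omega_2}\Pr(\gamma_2\geq\tau_2)$, where the remaining full-line integral is recognised via Lemma~1 as the Bessel-$\mathcal{K}_1$ marginal; substituting back gives (\ref{upperbound}).

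The main obstacle is the slab identity (c): it is what collapses the two-constraint success region into a union of single-constraint slabs, enabling the closed-form one-dimensional integrals. Without it, both hyperbolic constraints would remain active throughout each slab and the resulting integration over $\{Y\geq Y_0,\,X\geq\max(a_2+b_2/Y,\,b_1/(Y-a_1))\}$ would not simplify. Once (c) is secured via the quadratic sign analysis, everything else is routine monotonicity on exponential densities.
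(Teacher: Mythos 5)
Your proposal is correct and lands on exactly the bounds in the proposition, but it is organized rather differently from the paper's proof. The paper stays with the decomposition $P_{out}=\Pr(\gamma_1<\tau_1)+\Pr(\gamma_2<\tau_2)-\Pr(\gamma_1<\tau_1,\gamma_2<\tau_2)$ and reuses the exact evaluation of the joint term from Appendix~A, where the failure region is split by the ray through $(X_0,Y_0)$ into two double integrals producing the terms $\tfrac{1}{\Omega_j}e^{-\sigma^2\tau_j b/(P_i\Omega_i)}\mathcal{I}_i$ with $\mathcal{I}_i=\int_0^{V_i}$; it then writes $\mathcal{I}_i=\int_0^{\infty}-\int_{V_i}^{\infty}$, so the full-line Bessel pieces cancel against the marginal cdfs from Lemma~1, leaving only the tail integrals $\int_{V_i}^{\infty}\exp(-\alpha/z-z/\Omega_j)\,dz$ to be bounded. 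You instead work with the success region $\mathcal{S}$, decompose it into two slabs meeting in the corner rectangle, and prove the slab identity (c) so that each slab contributes a single one-dimensional integral --- which turns out to be precisely the same tail integral. The two final bounding steps are identical in substance to the paper's display (B.2): discarding the nonnegative $1/y$ term in the exponent for one direction, and the shift $y=u+Y_0$ plus monotonicity and the $\mathcal{K}_1$ integral identity for the other. What your version buys is rigor and self-containedness: the geometric facts (a)--(c), in particular the quadratic/Vieta argument showing that above $Y_0$ the $\gamma_1$-constraint is redundant, are exactly the content the paper compresses into the unproved chain of inequalities ``it is easy to see that $Y_0>(b+c/X)\sigma^2\tau_1/P_2>c/(P_1X/\sigma^2\tau_2-b)$ when $X>X_0$'' at the top of Appendix~A; what the paper's route buys is economy, since Proposition~1 then follows in two lines from machinery already built for Theorem~1.
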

\begin{IEEEproof}
See Appendix B.
\end{IEEEproof}

It is easily seen that $X_{0}\rightarrow 0$ and $Y_{0}\rightarrow 0$ for high SNR. Meanwhile, the modified bessel function of the second kind is bounded as \cite{PNC_TWC_MIMO}
\begin{align}
\exp\left(-x\right)\leq x\mathcal{K}_{1}(x)\leq1.\label{bound_K1}
\end{align}
Hence we have $x\mathcal{K}_{1}(x)\rightarrow 1$ when $x\rightarrow0$ based on Squeeze Theorem. As a result, for high SNR, the exact expression of outage probability $P_{out}$ in (\ref{outage_exact}), the lower bound in (\ref{lowerbound}) and the upper bound in (\ref{upperbound}) are all approximated as
\begin{align}
P_{out}\approx2{\setlength\arraycolsep{0.5pt}-}\exp\left({\setlength\arraycolsep{0.5pt}-}\frac{\sigma^2\tau_{2}b}{P_{1}\Omega_{1}}\right){\setlength\arraycolsep{0.5pt}-}\exp\left({\setlength\arraycolsep{0.5pt}-}\frac{\sigma^2\tau_{1}b}{P_{2}\Omega_{2}}\right).
\end{align}
\subsection{Ergodic Capacity}
Now let us derive the ergodic capacity for two-way AF relaying channels with energy harvesting. The total ergodic capacity can be given by
\begin{align}
C_{e}&=\mathbb{E}_{h_{1},h_{2}}\{\frac{1}{2}\log_{2}(1+\gamma_{1})\}+\mathbb{E}_{h_{1},h_{2}}\{\frac{1}{2}\log_{2}(1+\gamma_{2})\}\nonumber\\
&\overset{(i)}{{\setlength\arraycolsep{0.5pt}=}}\frac{1}{2\ln2}\int_{0}^{\infty}\frac{1{\setlength\arraycolsep{0.5pt}-}F_{1}(z)}{1+z}dz{\setlength\arraycolsep{0.5pt}+}\frac{1}{2\ln2}\int_{0}^{\infty}\frac{1{\setlength\arraycolsep{0.5pt}-}F_{2}(z)}{1+z}dz,
\end{align}
where Step (i) is based on the integration by parts and $F_{i}(\cdot)$ is the cdf of $\gamma_{i}$. We then obtain the following theorem.
\begin{theorem}
The ergodic capacity of two-way AF relaying channels with energy harvesting is
 \begin{align}
&C_{e}{\setlength\arraycolsep{0.5pt}=}\frac{1}{2\ln2}\sum_{i=1}^{2}\Psi\left(1,1;\frac{\sigma^2b}{P_{i}\Omega_{i}}\right){\setlength\arraycolsep{0.5pt}+}\frac{1}{2\ln2}\sum_{i=1}^{2}\left\{\sum_{l=0}^{\infty}\frac{(\sigma^2c)^{l+1}}{(P_{i}\Omega_{1}\Omega_{2})^{l+1}l!}\right.\nonumber\\
&{\setlength\arraycolsep{0.3pt}\times}\left[\left(\ln\frac{\sigma^2c}{P_{i}\Omega_{1}\Omega_{2}}{\setlength\arraycolsep{0.5pt}+}2\mathbf{C}{\setlength\arraycolsep{0.5pt}-}\sum_{k=1}^{l}{\frac{1}{k}}{\setlength\arraycolsep{0.5pt}-}\sum_{k=1}^{l+1}{\frac{1}{k}}\right)\Psi\left(l{\setlength\arraycolsep{0.3pt+}}2,l{\setlength\arraycolsep{0.3pt+}}2;\frac{\sigma^2b}{P_{i}\Omega_{i}}\right)\right.\nonumber\\
&\left.{\setlength\arraycolsep{0.3pt}+}\frac{1}{(l+1)!}\mathcal{J}_{i,l}\right]+\frac{\sigma^2c}{P_{i}\Omega_{1}\Omega_{2}}\left(\ln\frac{\sigma^2c}{P_{i}\Omega_{1}\Omega_{2}}{\setlength\arraycolsep{0.5pt}+}2\mathbf{C}{\setlength\arraycolsep{0.5pt}-}1\right)\nonumber\\
&\left.\times\Psi\left(2,2;\frac{\sigma^2b}{P_{i}\Omega_{i}}\right)+\frac{\sigma^2c}{P_{i}\Omega_{1}\Omega_{2}}\mathcal{J}_{i,0}\right\},\label{bound_EC}
\end{align}
where $\Psi(\alpha,\beta;z)$ is the confluent hypergeometric function \cite{Math_book} and $\mathbf{C}\approx0.5772$ is Euler's constant. Here we have 
\begin{align}
\mathcal{J}_{i,l}=\int_{0}^{\infty}\frac{\exp\left(\frac{{\setlength\arraycolsep{0.3pt}-}\sigma^2bz}{P_{i}\Omega_{i}}\right)z^{l{\setlength\arraycolsep{0.3pt}{\setlength\arraycolsep{0.3pt}+}}1}\ln{z}}{1+z}dz.
\end{align}

\end{theorem}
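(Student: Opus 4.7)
The plan is to start from the integration-by-parts form of $C_{e}$ displayed just before Theorem~2, substitute the two complementary CDFs supplied by Lemma~1, and then expand the Bessel factor into a power series. Applying Lemma~1 with $a=P_{2}/\sigma^{2}$ under the natural identification $X=|h_{1}|^{2}$, $Y=|h_{2}|^{2}$ yields $1-F_{1}(z)$; applying it again with $a=P_{1}/\sigma^{2}$ after interchanging the roles of $X$ and $Y$ (so that $|h_{2}|^{2}$ plays the part of the denominator variable) yields $1-F_{2}(z)$. In both cases the $1/\Omega_{1}$ prefactor cancels one of the square roots and the result collapses to the clean form $1-F_{i}(z)=e^{-\alpha_{i}z}\,u_{i}\mathcal{K}_{1}(u_{i})$ with $u_{i}=\sqrt{\beta_{i}z}$, where after a relabeling of the dummy index the coefficients $\alpha_{i}$ and $\beta_{i}/4$ match the quantities $\sigma^{2}b/(P_{i}\Omega_{i})$ and $\sigma^{2}c/(P_{i}\Omega_{1}\Omega_{2})$ appearing in the theorem.

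The key manipulation is to expand the Bessel factor using the classical series
\[
u\mathcal{K}_{1}(u)=1+\sum_{l=0}^{\infty}\frac{(u/2)^{2l+2}}{l!(l+1)!}\bigl[2\ln(u/2)-\psi(l+1)-\psi(l+2)\bigr]
\]
from \cite{Math_book}, substitute $u=\sqrt{\beta_{i}z}$ so that $(u/2)^{2l+2}=(\beta_{i}/4)^{l+1}z^{l+1}$ and $2\ln(u/2)=\ln(\beta_{i}/4)+\ln z$, and finally convert the digamma values via $\psi(l+1)=-\mathbf{C}+\sum_{k=1}^{l}1/k$. After collecting constants the integrand $(1-F_{i}(z))/(1+z)$ decomposes into a constant-order piece $e^{-\alpha_{i}z}/(1+z)$ plus an absolutely convergent series whose generic term is an explicit constant times $e^{-\alpha_{i}z}z^{l+1}/(1+z)$ or $e^{-\alpha_{i}z}z^{l+1}\ln z/(1+z)$.

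For the non-logarithmic integrals I would invoke Tricomi's representation $\Psi(a,a;p)=\Gamma(a)^{-1}\int_{0}^{\infty}e^{-pt}\,t^{a-1}(1+t)^{-1}\,dt$, which turns the constant-order term into $\Psi(1,1;\alpha_{i})$ and each higher monomial into $(l+1)!\,\Psi(l+2,l+2;\alpha_{i})$. The remaining logarithmic integrals coincide by definition with $\mathcal{J}_{i,l}$, so no further reduction is necessary. Collecting the $\ln(\beta_{i}/4)$ factor with the digamma constants delivers the bracket $\ln(\sigma^{2}c/(P_{i}\Omega_{1}\Omega_{2}))+2\mathbf{C}-\sum_{k=1}^{l}1/k-\sum_{k=1}^{l+1}1/k$, after which summing the $i=1$ and $i=2$ contributions reproduces~(\ref{bound_EC}).

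The main obstacle will be justifying the term-by-term integration: the series for $u\mathcal{K}_{1}(u)$ converges pointwise for every $z>0$ but its partial sums admit no uniform integrable majorant. I would handle this by observing that $(1-F_{i}(z))/(1+z)$ is itself bounded by $1/(1+z)$ near the origin and by $e^{-\alpha_{i}z}$ at infinity, so that Fubini--Tonelli applies once the integration is split at a finite cutoff and the near-origin and tail regions are treated separately via dominated convergence.
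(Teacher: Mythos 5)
Your proposal follows essentially the same route as the paper's Appendix C: expand $u\mathcal{K}_{1}(u)$ in its logarithmic power series, integrate term by term against $e^{-\alpha_i z}/(1+z)$, identify the non-logarithmic integrals with $\Gamma(l+2)\Psi(l+2,l+2;\alpha_i)$ via the Tricomi integral representation, and leave the logarithmic ones as $\mathcal{J}_{i,l}$. The only substantive addition is your justification of the sum--integral interchange, which the paper omits; otherwise the two arguments coincide.
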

\begin{proof}
See Appendix C.
\end{proof}
\newcounter{TempEqCnt}
\setcounter{TempEqCnt}{\value{equation}}
\setcounter{equation}{27}
\begin{figure*}[hb]
\hrulefill
\begin{align}
d(r,\gamma)\thickapprox\frac{\gamma\left\{A\left(\frac{1}{\Omega_{1}}+\frac{1}{\Omega_{2}}\right)\exp\left(-(\frac{1}{\Omega_{1}}+\frac{1}{\Omega_{2}})X_{0}\right)-\sum\limits_{ i,j \in \{ 1,2\}  \hfill \atop
i \ne j }\left(\frac{Bb}{\Omega_{i}}+\frac{A}{\Omega_{j}}\right)\exp\left(-\frac{b(1+\gamma)^{r}-b}{\gamma\Omega_{i}}-\frac{X_{0}}{\Omega_{j}}\right)\right\}}{1+\exp\left(-(\frac{1}{\Omega_{1}}+\frac{1}{\Omega_{2}})X_{0}\right)-\sum\limits_{ i,j \in \{ 1,2\}  \hfill \atop
i \ne j }\exp\left(-\frac{b(1+\gamma)^{r}-b}{\gamma\Omega_{i}}-\frac{X_{0}}{\Omega_{j}}\right)}.\label{DMT}
\end{align}
\end{figure*}
\setcounter{equation}{\value{TempEqCnt}}

Since $\mathcal{J}_{i,l}$ cannot be obtained in a closed form, we develop the following proposition to fast and efficiently evaluate the ergodic capacity.
\begin{proposition}
The ergodic capacity of two-way AF relaying channels with energy harvesting is bounded as following
\begin{align}
&\int_{0}^{\infty}\frac{\exp\left(\frac{-\sigma^2bz}{P_{2}\Omega_{2}}\right)\exp\left(-\sqrt{\frac{4\sigma^2cz}{P_{2}\Omega_{1}\Omega_{2}}}\right)}{1{\setlength\arraycolsep{0.5pt}+}z}dz\leq C_{e}\leq C_{e}^{t}\nonumber\\
&\leq\frac{1}{2\ln2}\sum_{i=1}^{2}\Psi\left(1,1;\frac{\sigma^2b}{P_{i}\Omega_{i}}\right).\label{bound_EC}
\end{align}
where we can get $C^{t}_{e}$ by substituting $\mathcal{J}_{i,l}\thickapprox\left(\frac{P_{i}\Omega_{i}}{\sigma^2b}\right)^{l+1}l!\left(\psi(l+1){\setlength\arraycolsep{0.5pt}-}\ln{\frac{\sigma^2b}{P_{i}\Omega_{i}}}\right)$ into (\ref{bound_EC}). Here, $\psi(1)=-\mathbf{C}$ and $\psi(k)=-\mathbf{C}+\sum_{i=1}^{k-1}\frac{1}{i}$ for $k\geqslant2$.
\end{proposition}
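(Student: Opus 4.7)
The plan is to start from the integration-by-parts identity
\begin{equation*}
C_e \;=\; \frac{1}{2\ln 2}\sum_{i=1}^2 \int_0^\infty \frac{1-F_i(z)}{1+z}\,dz,
\end{equation*}
substitute the explicit cdf $F_i$ obtained by specializing Lemma 1 to $\gamma_i$ (with $a=P_j/\sigma^2$ for $j\neq i$, and the roles of $X$ and $Y$ interchanged for $i=2$), and then sandwich $1-F_i(z)$ using the two-sided Bessel estimate $e^{-x}\leq x\mathcal{K}_1(x)\leq 1$ from (\ref{bound_K1}). After the substitution, $1-F_i(z)=\exp(-\sigma^2 bz/(P_j\Omega_j))\cdot x\mathcal{K}_1(x)$ with $x=\sqrt{4\sigma^2 cz/(P_j\Omega_1\Omega_2)}$, so all three bounds of the proposition will follow by choosing the appropriate side of the sandwich and evaluating a single Laplace-type integral.

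For the outermost upper bound, I would apply $x\mathcal{K}_1(x)\leq 1$, which yields $1-F_i(z)\leq \exp(-\sigma^2 b z/(P_j\Omega_j))$. The remaining integrals fit the classical representation $\int_0^\infty (1+z)^{-1}e^{-\alpha z}\,dz=\Psi(1,1;\alpha)$ of the Tricomi confluent hypergeometric function, and after relabeling $(i,j)$ the two terms sum to $\frac{1}{2\ln 2}\sum_{i=1}^2\Psi\bigl(1,1;\sigma^2 b/(P_i\Omega_i)\bigr)$.

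For the lower bound, I would instead invoke $x\mathcal{K}_1(x)\geq e^{-x}$, giving
\begin{equation*}
1-F_i(z)\;\geq\;\exp\!\bigl(-\sigma^2 b z/(P_j\Omega_j)\bigr)\,\exp\!\bigl(-\sqrt{4\sigma^2 c z/(P_j\Omega_1\Omega_2)}\bigr).
\end{equation*}
Since each summand in the expression for $C_e$ is nonnegative, retaining only the $i=1$ contribution (equivalently $j=2$) already leaves a valid lower bound, which matches, up to the $1/(2\ln 2)$ prefactor absorbed into the statement, the integral on the left-hand side of the proposition.

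The intermediate bound $C_e\leq C_e^t$ comes from approximating each $\mathcal{J}_{i,l}$ by $\int_0^\infty e^{-\alpha z}z^l\ln z\,dz=\alpha^{-l-1}l!\bigl(\psi(l+1)-\ln\alpha\bigr)$ with $\alpha=\sigma^2 b/(P_i\Omega_i)$, that is, by replacing $z/(1+z)$ in the integrand by $1$ and invoking a standard identity. The main obstacle is that this replacement is an asymptotic rather than a pointwise bound: because $\ln z$ changes sign at $z=1$, the pointwise inequality $z^{l+1}/(1+z)\leq z^l$ does \emph{not} propagate through $\mathcal{J}_{i,l}$ with a fixed sign. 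I would therefore justify the direction of $C_e\leq C_e^t$ by observing that for the small values of $\alpha$ relevant to the high-SNR regime, the mass of the integrand concentrates on $z\gg 1$ where $z/(1+z)\approx 1$, so the contribution from $z<1$ (where the inequality would reverse) is subdominant and the stated bound holds to within the approximations already present in Theorem~2.
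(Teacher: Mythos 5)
Your proposal follows essentially the same route as the paper: sandwich $x\mathcal{K}_{1}(x)$ between $e^{-x}$ and $1$ via (\ref{bound_K1}) to get the outer lower and upper bounds, and obtain $C_{e}^{t}$ by replacing $z^{l+1}/(1+z)$ with $z^{l}$ inside $\mathcal{J}_{i,l}$ and using the standard identity $\int_{0}^{\infty}e^{-\alpha z}z^{l}\ln z\,dz=\alpha^{-l-1}l!\left(\psi(l+1)-\ln\alpha\right)$. The one place you diverge is instructive: the paper simply asserts $\mathcal{J}_{i,l}\leqslant\int_{0}^{\infty}e^{-\alpha z}z^{l}\ln z\,dz$ as an inequality, whereas you correctly observe that the pointwise bound $z^{l+1}/(1+z)\leq z^{l}$ reverses on $z<1$ where $\ln z<0$, so the claimed direction is only an asymptotic statement for small $\alpha$ (equivalently high SNR), not a rigorous bound; your explicit acknowledgment of this is more honest than the paper's own proof, which contains the same gap silently. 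Two minor points to tidy up: the chain in the proposition requires $C_{e}^{t}\leq\frac{1}{2\ln2}\sum_{i}\Psi(1,1;\cdot)$, and your argument via $x\mathcal{K}_{1}(x)\leq1$ bounds $C_{e}$ rather than $C_{e}^{t}$ (the paper's route is to argue the entire second item in Theorem 2 is negative, which applies after the $\mathcal{J}_{i,l}$ substitution as well); and for the left-hand bound, dropping the factor $\frac{1}{2\ln2}<1$ while also discarding the $i=2$ term is not a pointwise-valid chain of inequalities in general, though it holds in near-symmetric scenarios --- a wart in the proposition's statement itself that neither you nor the paper resolves.
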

\begin{proof}
Firstly, we have
\begin{align}
\mathcal{J}_{i,l}&\leqslant\int_{0}^{\infty}\exp\left(\frac{{\setlength\arraycolsep{0.3pt}-}\sigma^2bz}{P_{i}\Omega_{i}}\right)z^{l}\ln{z}dz\nonumber\\
&=\left(\frac{P_{i}\Omega_{i}}{\sigma^2b}\right)^{l+1}l!\left(\psi(l+1){\setlength\arraycolsep{0.5pt}-}\ln{\frac{\sigma^2b}{P_{i}\Omega_{i}}}\right).
\end{align}
As a result, a tight lower bound of $C^{t}_{e}$ can be obtained 

Applying the upper bound of $xK_{1}(x)$ in (\ref{bound_K1}) and the series expansion of $\mathcal{K}_{1}(x)$, it is easy to calculate the second item of $C_{e}$ is smaller than 0. We thus have $\frac{1}{2\ln2}\sum_{i=1}^{2}\Psi\left(1,1;\frac{\sigma^2b}{P_{i}\Omega_{i}}\right)\geq C_{e}^{t}$. Based on the lower bound of $xK_{1}(x)$ in (\ref{bound_K1}),  we have the lower bound of $C_{e}$ and the proof is completed.
\end{proof}

Clearly, the upper bound closely match with the exact egodic capacity results at both high SNR and $\lambda$ regime because of $x\varpropto \sigma^2/\lambda P_{1}$ and $xK_{1}(x)\rightarrow1$ when $x\rightarrow0$.
\subsection{Finite-SNR DMT}
In this subsection, we characterize the finite-SNR diversity-multiplexing tradeoff for two-way AF relaying channels with energy harvesting. Following \cite{Finite_DMT}, the diversity gain at finite SNR is described by
\setcounter{equation}{26}
\begin{align}
d(r,\gamma)=-\frac{\partial\ln P_{out}}{\partial\ln\gamma}=-\frac{\gamma}{P_{out}}\frac{\partial P_{out}}{\partial\gamma},\label{dr}
\end{align}
where it follows that the multiplexing gain is $r=R/(\frac{1}{2}\log_{2}(1+\gamma))$ for two-way AF relaying channels.

We consider the symmetric relaying scenario, i.e., $P_{1}/\sigma^2=P_{2}/\sigma^2=\gamma$ and $T_{1}=T_{2}=R$, for the finite-SNR DMT. Thus, we have $\tau_{1}=\tau_{2}=(1+\gamma)^{r}-1$. It is difficult to obtain the diversity gain based on the exact expression of $P_{out}$ in (\ref{outage_exact}). Meanwhile, numerical results show that the lower bound in (\ref{lowerbound}) is closed to $P_{out}$ as shown in the next section. By using Proposition 1, the finite-SNR DMT can now be evaluated.
\begin{theorem}
The finite-SNR diversity-multiplexing tradeoff for the proposed protocol with symmetric relaying ($P_{1}/\sigma^2=P_{2}/\sigma^2=\gamma$ and $T_{1}=T_{2}=R$) is given by (\ref{DMT}) at the bottom of this page. Here, we use
\setcounter{equation}{28}
\begin{align}
&A\triangleq\frac{\partial X_{0}}{\partial \gamma}=\Bigg(r\gamma(1+\gamma)^{r-1}-(1+\gamma)^{r}+1\Bigg)\nonumber\\
&\times\left\{\frac{b}{2\gamma^2}\left(1+\sqrt{1+\frac{4c\gamma}{b^2(1+\gamma)^{r}-b^2}}\right)-\frac{c}{\gamma((1+\gamma)^{r}-1)}\right.\nonumber\\
&\left.\times\left(1+\frac{4c\gamma}{b^2(1+\gamma)^{r}-b^2}\right)^{-\frac{1}{2}}\right\},\label{VA}
\end{align}
and
\begin{align}
B\triangleq\frac{\partial \frac{(1+\gamma)^{r}-1}{\gamma}}{\partial \gamma}=\frac{r\gamma(1+\gamma)^{r-1}-(1+\gamma)^{r}+1}{\gamma^2}.\label{VB}
\end{align}
\end{theorem}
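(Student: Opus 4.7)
The plan is to compute $\partial P_{out}/\partial\gamma$ using the lower bound in Proposition 1 as the working proxy for $P_{out}$ (the text preceding Theorem 3 has just argued that this bound is tight), then substitute into the finite-SNR diversity definition (\ref{dr}). Because Theorem 3 fixes the symmetric regime $P_1/\sigma^2 = P_2/\sigma^2 = \gamma$ and $T_1 = T_2 = R$, I would first specialize (\ref{lowerbound}): with $\tau_1 = \tau_2 = \tau \triangleq (1+\gamma)^r - 1$, the auxiliary quantities $\varphi_1$ and $\varphi_2$ from Theorem 1 collapse to the common value $\tau^2 b^2/\gamma$, so $X_0 = Y_0$ and the lower bound reduces to
\begin{align*}
P_{out} \approx 1 + \exp\!\Bigl(-\tfrac{X_0}{\Omega_1} - \tfrac{X_0}{\Omega_2}\Bigr) - \sum\limits_{\substack{i,j\in\{1,2\}\\ i\neq j}} \exp\!\Bigl(-\tfrac{b\tau}{\gamma\Omega_i} - \tfrac{X_0}{\Omega_j}\Bigr),
\end{align*}
which is exactly the denominator of (\ref{DMT}) after rewriting $b\tau = b(1+\gamma)^r - b$.

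Next I would differentiate term by term. Only $X_0(\gamma)$ and $b\tau/\gamma$ carry the $\gamma$-dependence inside the exponentials, and their $\gamma$-derivatives are precisely $A$ of (\ref{VA}) and $bB$ with $B$ of (\ref{VB}), respectively. Applying the chain rule to each exponential and collecting signs yields
\begin{align*}
-\frac{\partial P_{out}}{\partial\gamma} &= A\Bigl(\tfrac{1}{\Omega_1}+\tfrac{1}{\Omega_2}\Bigr)\exp\!\Bigl(-\bigl(\tfrac{1}{\Omega_1}+\tfrac{1}{\Omega_2}\bigr)X_0\Bigr) \\
&\quad - \sum\limits_{\substack{i,j\in\{1,2\}\\ i\neq j}}\Bigl(\tfrac{Bb}{\Omega_i}+\tfrac{A}{\Omega_j}\Bigr)\exp\!\Bigl(-\tfrac{b\tau}{\gamma\Omega_i}-\tfrac{X_0}{\Omega_j}\Bigr).
\end{align*}
Multiplying by $\gamma/P_{out}$ then produces (\ref{DMT}) verbatim.

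The remaining task is to verify the explicit form of $A$ asserted in (\ref{VA}). Using $\varphi_1 = \tau^2 b^2/\gamma$, the definition of $X_0$ in Theorem 1 simplifies in the symmetric case to $X_0 = (b\tau/(2\gamma))\bigl(1+\sqrt{1+4c\gamma/(b^2\tau)}\bigr)$, so $X_0$ depends on $\gamma$ only through the groupings $\tau/\gamma$ and $\gamma/\tau$. Differentiating the first gives $B$, differentiating the second gives $-\gamma^2 B/\tau^2$, and both expose the common factor $r\gamma(1+\gamma)^{r-1}-(1+\gamma)^r+1 = \gamma^2 B$ that is pulled outside the braces in (\ref{VA}); the remainder is elementary algebra. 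This is the longest calculation in the proof and is the main source of bookkeeping, but it uses only ordinary calculus, so I do not anticipate any conceptual obstacle.
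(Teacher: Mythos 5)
Your proposal is correct and follows essentially the same route as the paper: specialize the Proposition 1 lower bound to the symmetric case where $X_{0}=Y_{0}$ takes the simplified form, differentiate with respect to $\gamma$ via the chain rule to identify $A$ and $B$, and substitute into the definition (\ref{dr}). You actually supply more of the intermediate bookkeeping (the collapse of $\varphi_{1},\varphi_{2}$ to $\tau^{2}b^{2}/\gamma$ and the term-by-term differentiation) than the paper's own three-line proof does.
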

\begin{proof}
According to $P_{1}=P_{2}$ and $T_{1}=T_{2}$, $X_{0}$ is equal to $Y_{0}$ and can be simplified as
\begin{align}
X_{0}=\frac{b((1+\gamma)^{r}{\setlength\arraycolsep{0.3pt}-}1)}{2\gamma}\left(1+\sqrt{1+\frac{4c\gamma}{b^2(1+\gamma)^{r}-b^2}}\right).\label{X0}
\end{align}
By substituting (\ref{X0}) into (\ref{lowerbound}), (\ref{VA}) can be calculated. Finally we can obtain (\ref{DMT}).
\end{proof}
\section{Numerical Results}
In this section, numerical results are presented to analyze and verify the accuracy of the derived analytical expressions. The effect of the power splitting ratio $\lambda$ at the energy harvesting relay on the system performance are also discussed. In the simulation, we consider both sources are separated by a normalized distance. Let $d_{1} (1-d_{1})$ denote the distance between $S_{1} (S_{2})$ and the relay node. Considering the large scale path loss, we have $\Omega_{1}=1/d_{1}^{3}$ and $\Omega_{2}=1/(1-d_{1})^{3}$ for $h_{1}$ and $h_{2}$ respectively. We use $d_{1}=1/2$, $T_{1}=T_{2}=1$ bps/Hz, $P_{1}=P_{2}$, $\eta=1$, and $\epsilon=1/2$ for the simulation unless special remark.

In Fig. \ref{outage_simulation} the simulation results are compared with the analytical results in terms of the outage probability. The analytical results for the outage probability are developed through (\ref{outage_exact}) and closely match with the simulation results. The lower and upper bounds derived in Proposition 1 are evaluated in the figure. We see both bounds are tight, especially for the asymmetric relaying traffic, e.g., $d_{1}\neq 1-d_{1}$ and $T_{1}\neq T_{2}$.
 \begin{figure}[t]
\centering
\includegraphics[width=2.8in,height=2.3in]{./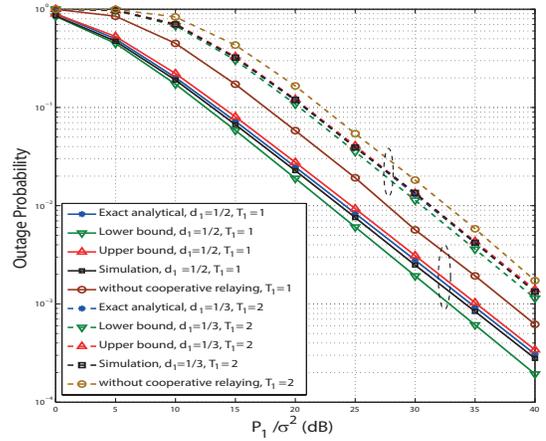}
\caption{Exact and bounds outage probabilities vs. SNR. Here, $\lambda=3/4$.}
\label{outage_simulation}
\end{figure}

 \begin{figure}[t]
\centering
\includegraphics[width=2.8in,height=2.3in]{./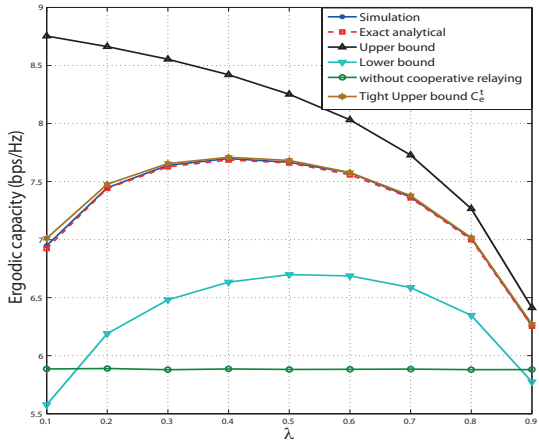}
\caption{Ergodic capacity vs. $\lambda$. Here, $P_{1}/\sigma^2=P_{2}/\sigma^2=20$ dB.}
\label{capacity_simulation}
\end{figure}

Fig. \ref{capacity_simulation} depicts the impact of the power splitting ratio $\lambda$ on the ergodic capacity. According to the definition of $\lambda$, the energy scavenged from the received signal would become more strong with a larger $\lambda$, reducing the forward signal power. We observe from Fig. \ref{capacity_simulation} that for $P_{1}/\sigma^2=P_{2}/\sigma^2=20$ dB, a reasonable value of $\lambda$ is from $0.3$ to $0.6$. Extreme values can significantly degrade the ergodic capacity. It can also be seen that the analytical result is in excellent agreement with the simulation result. The tight upper bound $C_{e}^{t}$ in (\ref{bound_EC}) is very close to the simulation results.  We can also see that the upper and lower bounds of the ergodic capacity is close to the exact ergodic capacity at high $\lambda$ region, while the gap between the bounds and the exact value is large at low $\lambda$ region. This is due to the fact that the exact expression in Theorem 2 is close to the upper bound in Proposition 2 if $xK_{1}(x)\rightarrow1$ when $x\rightarrow0$. In this case, $x\varpropto \sigma^2/\lambda P_{1}$, yielding $x\rightarrow\infty$ when $\lambda\rightarrow0$.

Fig. \ref{DMT_simulation} demonstrates the finite-SNR DMT based on (\ref{DMT}). It is shown that the diversity gain $d$ increases as SNR increases. Fig. \ref{DMT_L_simulation} depicts the diversity gain $d$ as a function of the power splitting ratio $\lambda$ with the multiplexing gain $r=0.5$ for difference $d_{1}$. It is clearly shown that there exists a trade-off between $d$ and $\lambda$. We observe that when the relay node moves towards $S_{1}$ or $S_{2}$, if $\lambda$ is set to small value, e.g., $\lambda=0.1$, the proposed protocol yields the largest diversity gain. When the relay node is placed at the middle position, the largest diversity gain can be obtained if $\lambda$ is around $1/2$.
 \begin{figure}[t]
\centering
\includegraphics[width=2.8in,height=2.3in]{./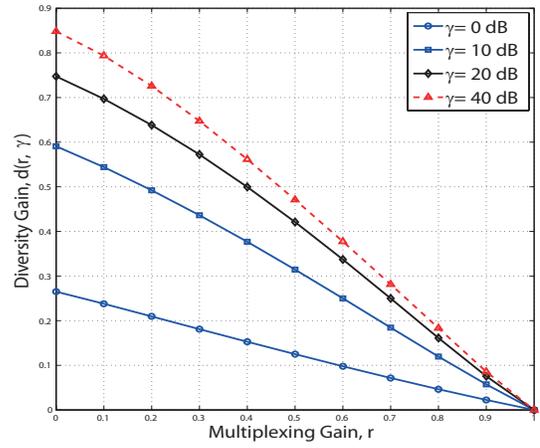}
\caption{Finite SNR DMT of the proposed protocol. Here, $\lambda=3/4$.}
\label{DMT_simulation}
\end{figure}

 \begin{figure}[t]
\centering
\includegraphics[width=2.8in,height=2.3in]{./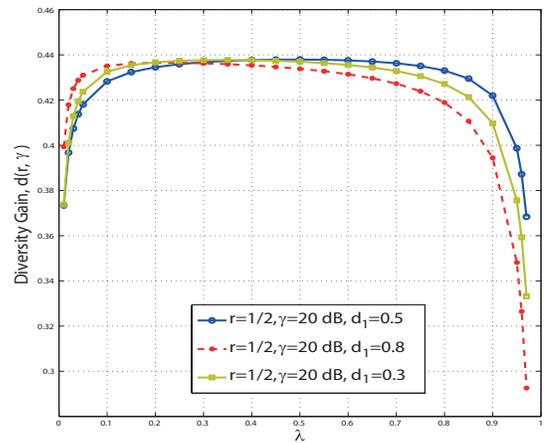}
\caption{Diversity gain vs. $\lambda$. Here, $r=0.5$ and $P_{1}/\sigma^2=P_{2}/\sigma^2=20$ dB.}
\label{DMT_L_simulation}
\end{figure}

Finally, comparing with the non-cooperative relaying scheme in Fig. \ref{outage_simulation} and Fig. \ref{capacity_simulation}, we find that the use of the relay node can improve the ergodic capacity and achieve higher outage performance. Surprisingly, this improvement is not build on the additional resource consumption, neither the energy, time nor radio spectrum resource.
\section{Conclusion}
This paper has developed and analyzed the wireless information and power transfer protocol in two-way AF relaying channels. The proposed protocol not only does not consume additional resource, but also can improve the transmission efficiency and offer more reliable communication. We have characterized the exact expressions of the proposed energy harvesting relaying protocol in terms of outage probability, ergodic capacity and finite-SNR DMT. Besides, we have derived the tight bounds of the outage probability and the ergodic capacity. Numerical results are presented to verify the accuracy of our theoretical predictions. Furthermore, we obtain the valuable insights on the impact of the power splitting ratio $\lambda$ on the ergodic capacity and finite-SNR DMT.

\appendices
\renewcommand{\theequation}{\thesection.\arabic{equation}}
 \setcounter{equation}{0}
\section{Proof of Theorem 1}
 In order to proceed, solving the following equations
\begin{equation}
Y=\frac{\sigma^2\tau_{1}}{P_{2}}(b+\frac{c}{X}),~~X=\frac{\sigma^2\tau_{2}}{P_{1}}(b+\frac{c}{Y}),
\end{equation}
we have the solution point $(X_{0},Y_{0})$. It is easy to see that $Y_{0}>(b+c/X)\sigma^2\tau_{1}/P_{2}>c/(P_{1}X/\sigma^2\tau_{2}-b)$ when $X>X_{0}$. Likewise, we have $X_{0}>(b+c/Y)\sigma^2\tau_{2}/P_{1}>c/(P_{2}Y/\sigma^2\tau_{1}-b)$ when $Y>Y_{0}$. We now can compute
\begin{align}
&\Pr\left(\gamma_{1}<\tau_{1}, \gamma_{2}<\tau_{2}\right)\label{A_1}\\
&\Pr\left(|h_{2}|^2<\frac{\sigma^2\tau_{1}}{P_{2}}(b+\frac{c}{|h_{1}|^2}),|h_{1}|^2<\frac{\sigma^2\tau_{2}}{P_{1}}(b+\frac{c}{|h_{2}|^2})\right)\nonumber\\
&=\int_{0}^{Y_{0}}\int_{\frac{X_{0}}{Y_{0}}y}^{\frac{\sigma^2\tau_{2}}{P_{1}}(b+\frac{c}{y})}\frac{1}{\Omega_{1}}\exp(-\frac{x}{\Omega_{1}})\frac{1}{\Omega_{2}}\exp(-\frac{y}{\Omega_{2}})dxdy\nonumber\\
&{\setlength\arraycolsep{0.5pt}+}\int_{0}^{X_{0}}\int_{\frac{Y_{0}}{X_{0}}x}^{\frac{\sigma^2\tau_{1}}{P_{2}}(b{\setlength\arraycolsep{0.5pt}+}\frac{c}{x})}\frac{1}{\Omega_{1}}\exp(-\frac{x}{\Omega_{1}})\frac{1}{\Omega_{2}}\exp(-\frac{y}{\Omega_{2}})dydx\nonumber\\
&=1-\exp{\left(-\frac{X_{0}}{\Omega_{1}}-\frac{Y_{0}}{\Omega_{2}}\right)}-\sum\limits_{ i,j \in \{ 1,2\}  \hfill \atop
i \ne j }\frac{1}{\Omega_{j}}\exp\left(-\frac{\sigma^2\tau_{j}b}{P_{i}\Omega_{i}}\right)\mathcal{I}_{i},\nonumber
\end{align}
where $\mathcal{I}_{i}=\int_{0}^{V_{i}}\exp\left(\frac{-\sigma^2\tau_{j}c}{P_{i}\Omega_{i}}\frac{1}{z}-\frac{z}{\Omega_{j}}\right)dz$ for $i=1,2$ with $V_{1}=Y_{0}$ and $V_{2}=X_{0}$.

It is difficult to obtain the closed forms of $\mathcal{I}_{1}$ and $\mathcal{I}_{2}$. The second-order Taylor series expansion of $\mathcal{H}_{i,j}(z)$ can be used to obtain an approximation of $\mathcal{I}_{1}$ and $\mathcal{I}_{2}$. Thus we have
\begin{align}
\mathcal{I}_{1}&\approx\sum_{n=0}^{2}\frac{\mathcal{H}^{(n)}_{1,2}(\nu_{1})}{n!}\int_{0}^{Y_{0}}(y-\nu_{1})^{n}dy\nonumber\\
&=\sum_{n=0}^{2}\frac{\mathcal{H}^{(n)}_{1,2}(\nu_{1})}{n!(n+1)}\Bigg\{(Y_{0}-\nu_{1})^{n+1}-(-\nu_{1})^{n+1}\Bigg\}
\end{align}
where $\nu_{1}=Y_{0}/2$ is the convergent point. Likewise, we can also have $\mathcal{I}_{2}$.

Based on Lemma 1, we thus have
\begin{align}
\Pr(\gamma_{i}{\setlength\arraycolsep{0.5pt}<}\tau_{i}){\setlength\arraycolsep{0.5pt}=}1{\setlength\arraycolsep{0.5pt}-}\exp({\setlength\arraycolsep{0.5pt}-}\frac{\sigma^2\tau_{i}b}{P_{j}\Omega_{j}})\sqrt{\frac{4\sigma^2\tau_{i}c}{P_{j}\Omega_{1}\Omega_{2}}}\mathcal{K}_{1}\left(\sqrt{\frac{4\sigma^2\tau_{i}c}{P_{j}\Omega_{1}\Omega_{2}}}\right),\nonumber
\end{align}
for $i,j\in\{1,2\}$ and $i\neq j$.  The theorem is thus proved.
\section{Proof of Proposition 1}
 \setcounter{equation}{0}
Following the proof of Theorem 1, it is easy to calculate
\begin{align}
\mathcal{I}_{1}&=\sqrt{\frac{4\sigma^2\tau_{i}c}{P_{j}\Omega_{1}\Omega_{2}}}\mathcal{K}_{1}\left(\sqrt{\frac{4\sigma^2\tau_{i}c}{P_{j}\Omega_{1}\Omega_{2}}}\right)\nonumber\\
&-\int_{Y_{0}}^{\infty}\exp\left(\frac{-\sigma^2\tau_{2}c}{P_{1}\Omega_{1}}\frac{1}{y}-\frac{y}{\Omega_{2}}\right)dy.\label{B_1}
\end{align}
Then, it can be bounded as following
\begin{align}
&\exp\left(-\frac{Y_{0}}{\Omega_{2}}\right)\int_{0}^{\infty}\exp\left(\frac{{\setlength\arraycolsep{0.5pt}-}\sigma^2\tau_{2}c}{P_{1}\Omega_{1}y}{\setlength\arraycolsep{0.5pt}-}\frac{y}{\Omega_{2}}\right)dy\label{B_2}\\
&\leq\int_{Y_{0}}^{\infty}\exp\left(\frac{{\setlength\arraycolsep{0.5pt}-}\sigma^2\tau_{2}c}{P_{1}\Omega_{1}y}{\setlength\arraycolsep{0.5pt}-}\frac{y}{\Omega_{2}}\right)dy\leq Y_{0}\int_{1}^{\infty}\exp\left(-\frac{zY_{0}}{\Omega_{2}}\right)dz.\nonumber
\end{align}
Similarly, we get the lower and upper bounds of $\int_{X_{0}}^{\infty}\exp\left(\frac{-\sigma^2\tau_{1}c}{P_{2}\Omega_{2}}\frac{1}{x}-\frac{x}{\Omega_{1}}\right)dx$. By substituting (\ref{B_2}) and (\ref{B_1}) into (\ref{A_1}), the proof is finished.
\section{Proof of Theorem 2}
 \setcounter{equation}{0}
Applying the series expansion of $\mathcal{K}_{1}(x)$, we have
 \begin{align}
&\int_{0}^{\infty}\frac{1{\setlength\arraycolsep{0.5pt}-}F_{1}(z)}{1{\setlength\arraycolsep{0.5pt}+}z}dz{\setlength\arraycolsep{0.5pt}=}\int_{0}^{\infty}\frac{\exp\left(\frac{-\sigma^2bz}{P_{2}\Omega_{2}}\right)}{1+z}dz{\setlength\arraycolsep{0.5pt}+}\sum_{l=1}^{\infty}\left(\ln\frac{\sigma^2cz}{P_{2}\Omega_{1}\Omega_{2}}\right.\nonumber\\
&\left.{\setlength\arraycolsep{0.5pt}+}2\mathbf{C}{\setlength\arraycolsep{0.5pt}-}\sum_{k=1}^{l}{\frac{1}{k}}{\setlength\arraycolsep{0.5pt}-}\sum_{k=1}^{l{\setlength\arraycolsep{0.5pt}+}1}{\frac{1}{k}}\right)\int_{0}^{\infty}\frac{\exp\left(\frac{-\sigma^2bz}{P_{2}\Omega_{2}}\right)}{1+z}\frac{\left(\frac{\sigma^2cz}{P_{2}\Omega_{1}\Omega_{2}}\right)^{l{\setlength\arraycolsep{0.5pt}+}1}}{l!(l+1)!}dz.\nonumber\\
&{\setlength\arraycolsep{0.5pt}+}\int_{0}^{\infty}\frac{\exp\left(\frac{-\sigma^2bz}{P_{2}\Omega_{2}}\right)\frac{2\sigma^2cz}{P_{2}\Omega_{1}\Omega_{2}}\left(\ln\sqrt{\frac{\sigma^2cz}{P_{2}\Omega_{1}\Omega_{2}}}{\setlength\arraycolsep{0.5pt}+}\mathbf{C}{\setlength\arraycolsep{0.5pt}-}\frac{1}{2}\right)}{1+z}.
 \end{align}
 
 Let us denote the first, second and third items in the RHS of the above equation as $Q_{1}$, $Q_{2}$ and $Q_{3}$ respectively. Following the definition of the confluent hypergeometric function \cite{Math_book}, $Q_{1}$ is given by
 \begin{align}
Q_{1}=\Psi(1,1;\frac{\sigma^2b}{P_{2}\Omega_{2}})\Gamma(1)=\Psi(1,1;\frac{\sigma^2b}{P_{2}\Omega_{2}}),\label{q1}
\end{align}
where $\Gamma(\cdot)$ is the Gamma function and $\Gamma(1)=1$. $Q_{2}$ can be rewritten as
 \begin{align}
&Q_{2}=\sum_{l=1}^{\infty}\frac{\left(\frac{\sigma^2c}{P_{2}\Omega_{1}\Omega_{2}}\right)^{l{\setlength\arraycolsep{0.5pt}+}1}}{l!(l+1)!}\int_{0}^{\infty}\frac{\exp\left(\frac{-\sigma^2bz}{P_{2}\Omega_{2}}\right)z^{l{\setlength\arraycolsep{0.5pt}+}1}}{1+z}\label{q2}\\
&\times\left\{\left(\ln\frac{\sigma^2c}{P_{2}\Omega_{1}\Omega_{2}}{\setlength\arraycolsep{0.5pt}+}2\mathbf{C}{\setlength\arraycolsep{0.5pt}-}\sum_{k=1}^{l}{\frac{1}{k}}{\setlength\arraycolsep{0.5pt}-}\sum_{k=1}^{l{\setlength\arraycolsep{0.5pt}+}1}{\frac{1}{k}}\right)+\ln{z}\right\}dz\nonumber\\
&{{\setlength\arraycolsep{0.5pt}=}}\sum_{l=1}^{+\infty}\frac{\left(\frac{\sigma^2c}{P_{2}\Omega_{1}\Omega_{2}}\right)^{l{\setlength\arraycolsep{0.5pt}+}1}}{l!}\left\{\left(\ln\frac{\sigma^2c}{P_{2}\Omega_{1}\Omega_{2}}{\setlength\arraycolsep{0.5pt}+}2\mathbf{C}{\setlength\arraycolsep{0.5pt}-}\sum_{k=1}^{l}{\frac{1}{k}}{\setlength\arraycolsep{0.5pt}-}\sum_{k=1}^{l{\setlength\arraycolsep{0.5pt}+}1}{\frac{1}{k}}\right)\right.\nonumber\\
&\left.{\setlength\arraycolsep{0.3pt\times}}\Psi\left(l{\setlength\arraycolsep{0.3pt+}}2,l{\setlength\arraycolsep{0.3pt+}}2;\frac{\sigma^2b}{P_{2}\Omega_{2}}\right){\setlength\arraycolsep{0.3pt}+}\int_{0}^{\infty}\frac{\exp\left(\frac{{\setlength\arraycolsep{0.3pt}-}\sigma^2bz}{P_{2}\Omega_{2}}\right)z^{l{\setlength\arraycolsep{0.3pt}{\setlength\arraycolsep{0.3pt}+}}1}\ln{z}}{(l+1)!(1+z)}dz\right\}.\nonumber
 \end{align}
where we use $\Gamma(n+1)=n!$ for a natural number $n$ in the last step. Likewise, we have 
\begin{align}
&Q_{3}=\frac{\sigma^2c}{P_{2}\Omega_{1}\Omega_{2}}\left(\ln\frac{\sigma^2c}{P_{2}\Omega_{1}\Omega_{2}}{\setlength\arraycolsep{0.5pt}+}2\mathbf{C}{\setlength\arraycolsep{0.5pt}-}1\right)\Psi\left(2,2;\frac{\sigma^2b}{P_{2}\Omega_{2}}\right)\nonumber\\
&+\frac{\sigma^2c}{P_{2}\Omega_{1}\Omega_{2}}\int_{0}^{\infty}\frac{\exp\left(\frac{{\setlength\arraycolsep{0.3pt}-}\sigma^2bz}{P_{2}\Omega_{2}}\right)z\ln{z}}{1+z}dz.
\end{align}

Consequently, the closed-form expression of $\int_{0}^{\infty}\frac{1{\setlength\arraycolsep{0.5pt}-}F_{2}(z)}{1{\setlength\arraycolsep{0.5pt}+}z}dz$ can also be derived in the same way.
\bibliographystyle{IEEEtran}
\bibliography{IEEEabrv,overall}

\begin{thebibliography}{10}
\providecommand{\url}[1]{#1}
\csname url@samestyle\endcsname
\providecommand{\newblock}{\relax}
\providecommand{\bibinfo}[2]{#2}
\providecommand{\BIBentrySTDinterwordspacing}{\spaceskip=0pt\relax}
\providecommand{\BIBentryALTinterwordstretchfactor}{4}
\providecommand{\BIBentryALTinterwordspacing}{\spaceskip=\fontdimen2\font plus
\BIBentryALTinterwordstretchfactor\fontdimen3\font minus
  \fontdimen4\font\relax}
\providecommand{\BIBforeignlanguage}[2]{{%
\expandafter\ifx\csname l@#1\endcsname\relax
\typeout{** WARNING: IEEEtran.bst: No hyphenation pattern has been}%
\typeout{** loaded for the language `#1'. Using the pattern for}%
\typeout{** the default language instead.}%
\else
\language=\csname l@#1\endcsname
\fi
#2}}
\providecommand{\BIBdecl}{\relax}
\BIBdecl

\bibitem{ARF_Sigcom}
V.~Liu, A.~Parks, V.~Talla, S.~Gollakota, D.~Wetherall, and J.~R. Smith,
  ``{Ambient Backscatter: Wireless Communication Out of Thin Air},'' \emph{to
  appear in ACM SIGCOMM'13}, Aug. 2013.

\bibitem{First_paper_energy}
L.~Varshney, ``{Transporting Information and Energy Simultaneously},'' \emph{in
  Proc. IEEE ISIT}, July 2008.

\bibitem{WIPT_TWC}
X.~Zhou, R.~Zhang, and C.~K. Ho, ``{Wireless Information and Power Transfer:
  Architecture Design and Rate-Energy Tradeoff},'' \emph{Available:
  http://arxiv.org/abs/1205.0618}, May 2012.

\bibitem{WIPT_TWC_MIMO}
R.~Zhang and C.~K. Ho, ``{MIMO Broadcasting for Simultaneous Wireless
  Information and Power Transfer},'' \emph{IEEE Trans. Wireless Commun.},
  vol.~12, no.~5, pp. 1989--2001, May 2013.

\bibitem{WIPT_IWCT}
Z.~Xiang and M.~Tao, ``{Robust Beamforming for Wireless Information and Power
  Transmission},'' \emph{IEEE Wireless Commun. Letters}, vol.~1, no.~4, pp.
  372--375, Jan. 2012.

\bibitem{WIPT_TWC_Relay}
A.~A. Nasir, X.~Zhou, S.~Durrani, and R.~A. Kennedy, ``{Relaying Protocols for
  Wireless Energy Harvesting and Information Processing},'' \emph{to appear in
  IEEE Trans. Wireless Commun.}, 2013.

\bibitem{WIPT_OFDM}
D.~W.~K. Ng, E.~S. Lo, and R.~Schober, ``{Energy-Efficient Resource Allocation
  in Multiuser OFDM Systems with Wireless Information and Power Transfer},''
  \emph{Available: http://arxiv.org/abs/1212.3638}, Dec. 2012.

\bibitem{WIPT_Power}
Z.~Ding, S.~M. Perlaza, I.~Esnaola, and H.~V. Poor, ``{Power Allocation
  Strategies in Energy Harvesting Wireless Cooperative Networks},''
  \emph{http://arxiv.org/abs/1307.1630}, July 2013.

\bibitem{Math_book}
I.~S. Gradshteyn and I.~M. Ryzhik, \emph{{Table of Integrals, Series, and
  Products, 7th ed}}.\hskip 1em plus 0.5em minus 0.4em\relax New York: Academic
  Press, 2007.

\bibitem{PNC_TWC_MIMO}
Z.~Ding, K.~K. Leung, D.~L. Goeckel, and D.~Towsley, ``{On the Study of Network
  Coding with Diversity},'' \emph{IEEE Trans. Wireless Commun.}, vol.~8, no.~3,
  pp. 1247--1259, March 2009.

\bibitem{Finite_DMT}
R.~Narasimhan, ``{Finite-SNR diversity-multiplexing tradeoff for correlated
  Rayleigh and Rician MIMO channels},'' \emph{IEEE Trans. Inf. Theory},
  vol.~52, no.~9, pp. 3965--3979, Sep. 2006.

\end{thebibliography}
\end{document}